\newcommand{\dmin}{\ensuremath{d}}
\newcommand{\erasure}{\ensuremath{\vartimes}}
\newcommand{\Erf}{\ensuremath{\mathrm{Erf}}}
\newcommand{\Erfc}{\ensuremath{\mathrm{Erfc}}}
\renewcommand{\vec}[1]{\ensuremath{\mathbf{#1}}}
\newcommand{\R}{\ensuremath{\mathbbm{R}}}
\newcommand{\up}{\ensuremath{\underline{p}}}
\newcommand{\op}{\ensuremath{\overline{p}}}
\newcommand{\ul}{\ensuremath{\underline{l}}}
\newcommand{\ol}{\ensuremath{\overline{l}}}
\newcommand{\ut}{\ensuremath{\underline{t}}}
\newcommand{\ot}{\ensuremath{\overline{t}}}
\theoremstyle{plain}
\theoremstyle{break}
\theoremstyle{plain}
   \newtheorem{theorem}{Theorem}}
\theoremstyle{plain}
\theoremstyle{plain}
\theoremstyle{plain}
\begin{document}

% paper title
\title{On Generalized Minimum Distance Decoding Thresholds for the AWGN Channel}

% author names and affiliations
% use a multiple column layout for up to three different
% affiliations
\author{
\authorblockN{Christian Senger, Vladimir R. Sidorenko}\thanks{This work has been supported by DFG (German Research Council) under grants BO~867/15, and BO~867/17. Vladimir R. Sidorenko is on leave from IITP, Russian Academy of Sciences, Moscow, Russia.}
\authorblockA{\small Inst. of Telecommunications and Applied Information Theory\\
Ulm University, Ulm, Germany \\
\{christian.senger$\;\vert\;$vladimir.sidorenko\}@uni-ulm.de}
\and
\authorblockN{Victor V. Zyablov}
\authorblockA{\small Inst. for Information Transmission Problems\\
Russian Academy of Sciences, Moscow, Russia \\
zyablov@iitp.ru}
}

% make the title area
\maketitle

\begin{abstract}
We consider the Additive White Gaussian Noise channel with Binary Phase Shift Keying modulation. Our aim is to enable an algebraic hard decision Bounded Minimum Distance decoder for a binary block code to exploit soft information obtained from the demodulator. This idea goes back to Forney \cite{forney:1966a}, \cite{forney:1966b} and is based on treating received symbols with low reliability as erasures. This erasing at the decoder is done using a threshold, each received symbol with reliability falling below the threshold is erased. Depending on the target overall complexity of the decoder this pseudo--soft decision decoding can be extended from one threshold $T$ to $z>1$ thresholds $T_1<\cdots<T_z$ for erasing received symbols with lowest reliability. The resulting technique is widely known as Generalized Minimum Distance decoding. In this paper we provide a means for explicit determination of the optimal threshold locations in terms of minimal decoding error probability. We do this for the one and the general $z>1$ thresholds case, starting with a geometric interpretation of the optimal threshold location problem and using an approach from \cite{zyablov:1970}.
\end{abstract}

\section{Introduction}\label{section:introduction}
The concept of concatenated codes was introduced by Forney in 1966 \cite{forney:1966a}. Concatenated codes consist of an inner and an outer code, a decoder for the concatenated code includes their associated decoders. Encoding is done such that the information block to be transmitted is first encoded using the outer code and then the symbols of the resulting outer codeword are encoded using the inner code. At the receiver side first the decoder for the inner code calculates estimates for the outer codeword symbols. Then, the decoder for the outer code tries to reconstruct the transmitted codeword utilizing the estimates from the inner decoder as inputs. In his original work, Forney proposed {\em Generalized Minimum Distance (GMD)} decoding, which extends simple single--trial decoding of concatenated codes to multiple decoding trials. More precisely, Forney specified GMD decoding for an integer $z>(\dmin-1)/2$ of decoding trials, where $\dmin$ is the minimum Hamming distance of the outer code. For smaller values of $z$, Weber and Abdel--Ghaffar later introduced the term {\em reduced} GMD decoding \cite{weber_abdel-ghaffar:2003}. GMD decoding relies on an outer error/erasure decoder and works as follows. In each decoding trial, an increasing set of most unreliable symbols obtained from the inner decoder are erased. The resulting intermediate word is fed into the outer error/erasure decoder, which calculates an outer codeword estimate. Potentially, each decoding trial results in a different outer codeword estimate so some means of selecting the ''best'' estimate needs to be provided.

Let the number of performed decoding trials be $z$. We do not distinguish between reduced GMD decoding and full GMD decoding and allow $z$ to be any non-zero natural number independent of the code parameters. In practice, erasing of the most unreliable symbols is accomplished using a set of real--valued thresholds $\{T_1, \ldots, T_z\}$ with $T_1<\cdots< T_z$. If the reliability value of a symbol falls below threshold $T_i$ in the \mbox{$i$-th} decoding trial, then this symbol is marked as erasure in this trial. The threshold version of GMD decoding was presented by Blokh and Zyablov \cite{blokh_zyablov:1982}.

In this paper we consider a special case of a code concatenation, i.e. the case where the inner ''code'' is {\em Binary Phase Shift Keying (BPSK)} modulation and the outer code is a linear binary code with an error/erasure {\em Bounded Minimum Distance (BMD)} decoder. Such decoders are well--known for certain important code classes, e.g. for {\em Bose--Chaudhuri--Hocquenghem (BCH)} codes \cite{blahut:2003}.

Our work is organized as follows. In Section \ref{section:definitions} we give basic definitions and notations that are used in the remainder of the paper. Section \ref{section:1T} considers the most simple case of (reduced) GMD decoding, i.e. error/erasure BMD decoding with one single threshold. Its optimal location is derived using a geometric approach. Note that we use ''optimal'' as an abbreviation for ''optimal in terms of minimal decoding error probability''. In Section \ref{section:zT} we consider the general case of $z>1$ thresholds before we finally wrap up the paper with conclusions and further research perspectives in Section \ref{section:conclusions}.

\section{Definitions and Notations}\label{section:definitions}

Assume an {\em Additive White Gaussian Noise (AWGN)} channel with BPSK modulation, let the transmitted symbols be w.l.o.g. $x\in\{-1, +1\}$, i.e. the modulator performs for every transmitted binary value $c\in\{0, 1\}$ the operation $x=(-1)^c$ and the transmit signal power is fixed to $E_s=1$. Hence, the standard deviation of the AWGN channel is $\sigma=\sqrt{N_0/2}$. We define the probability that for given $\sigma$ a transmitted symbol $x$ results in a received symbol $y$ within the real interval $[a, b]$ as
\begin{equation*}
p_\sigma(a, b):=\int_a^b \frac{1}{\sqrt{2\pi}\sigma} \,\mathrm{exp}\left(-\frac{(\chi-x)^2}{2\sigma^2}\right)\;d\chi.
\end{equation*}
For simplicity we also define the negative logarithmic probability
\begin{equation*}
l_\sigma(a, b):=-\mathrm{ln}\left(p_\sigma(a, b)\right).
\end{equation*}

As outer code we assume a linear binary $(n, k, \dmin)$ code $\mathcal{C}$ with code length $n$, dimension $k$ and minimum Hamming distance $\dmin$. An error/erasure BMD decoder for $\mathcal{C}$ can decode error patterns with $\tau$ erasures and $\epsilon$ errors as long as
\begin{equation}\label{eqn:cap}
2\epsilon+\tau<\dmin.
\end{equation}

A codeword $\vec{c}=(c_0, \ldots, c_{n-1})\in\mathcal{C}$ is mapped to a vector $\vec{x}=(x_0, \ldots, x_{n-1})\in\{-1, +1\}^n$ by the modulation function described above. At the receiver side, the vector $\vec{y}=(y_0, \ldots, y_{n-1})\in\R^n$ is received. For each received symbol holds $y_j=x_j+\xi$, where $\xi$ is the realization of a Gaussian noise process with mean $x_j$ and standard deviation $\sigma$.

\section{The Single Threshold Case}\label{section:1T}

We start our considerations with the case of one single threshold $0\leq T\leq 1=E_s$. This means that the following quantization--and--erasing function is applied to any received symbol $y_j$.
\begin{equation*}
  \phi_{T}:=\left\{\begin{array}{rcl}
    \R & \longrightarrow & \{0, 1\}\cup\erasure\\
    y_j & \longmapsto & \left\{\begin{array}{cl}
      1 & ;\;\text{if}\;y_j<-T\\
      0 & ;\;\text{if}\;y_j>T\\
      \erasure & ;\;\text{if}\;-T\leq y_j \leq T
    \end{array}\right.\\
  \end{array}\right..
\end{equation*}
The obvious extension of $\phi_T$ to vectors is
\begin{equation*}
\phi_T(\vec{y}):=\left(\phi_T(y_0), \ldots, \phi_T(y_{n-1})\right).
\end{equation*}

Note that since $\mathcal{C}$ is a linear code and the threshold location is symmetric, we can restrict our considerations in the following w.l.o.g. to the case $\forall\,j=0, \ldots, n-1: x_j=+1$, i.e. transmission of the all--zero codeword.

Consider the probability $P_\sigma$ that the decoder produces an error, i.e. the probability that it either returns no codeword or a wrong codeword. We make use of the abbreviated notation $p_x:=p_\sigma(-T, T)$ and $p_e:=p_\sigma(-\infty, -T)$ for the erasure and error probability, respectively. Similarly, we define the negative logarithms $l_x:=-\ln(p_x)$ and $l_e:=-\ln(p_e)$.
\begin{multline}\label{eqn:exact1T}
P_\sigma= \sum_{\tau=0}^{n}  \sum_{\epsilon=t_\tau}^{n-\tau} \binom{n}{\tau, \epsilon, n-\tau-\epsilon}\cdot\\
\cdot p_x^\tau p_e^\epsilon \left(1-p_x-p_e\right)^{n-\tau-\epsilon},
\end{multline}
where $t_\tau:=\left\lceil\frac{\dmin-\tau}{2}\right\rceil$. For good channel conditions, i.e. small values of $\sigma$, we obtain the approximation
\begin{equation*}
P_\sigma\approx \max_{0\leq\tau\leq\dmin}\left\{
\binom{n}{\tau, t_\tau, n-\tau-\epsilon}\, p_x^\tau p_e^{t_\tau} \right\}.
\end{equation*}
Note that the last term in (\ref{eqn:exact1T}) can be neglected since it is close to one. Transforming this approximation into negative logarithmic form we obtain
\begin{multline*}
-\ln(P_\sigma)\approx \min_{0\leq\tau\leq\dmin}\left\{
\tau\, l_x+ t_\tau\, l_e -\right.\\
\left. -\ln(2)\left(n\, H(\tau/n)+ (n-\tau)\, H\left(\frac{t_\tau}{n-\tau}\right)\right)\right\},
\end{multline*}
where $H(\cdot)$ denotes the binary entropy function. Since it only assumes values between 0 and 1 and $l_x$ and $l_e$ tend to infinity for small $\sigma$, we can further approximate
\begin{equation}\label{eqn:approx}
-\ln(P_\sigma)\approx \min_{0\leq\tau\leq\dmin}\left\{ \tau\, l_x+ t_\tau\, l_e \right\}.
\end{equation}

Now we return to the non-abbreviated notation and define the goal function
\begin{equation}\label{eqn:1Tgoalfunction}
  g_\sigma(\tau, T):=\tau\, l_\sigma(-T, T)+\frac{\dmin-\tau}{2} \,l_\sigma(-\infty, -T).
\end{equation}
We omit the ceiling operation from $t_\tau$ to obtain a function which is linear in $\tau$. By means of (\ref{eqn:approx}) we observe that the minimum of the goal function over $\tau$ approximates the negative logarithmic decoding error probability as long as the channel standard deviation $\sigma$ is small. The behavior of the goal function for several thresholds is depicted in Figure~\ref{fig:1Tgoalfunctions}. The number of erasures $\tau$ is spread on the abscissa and each straight line represents one threshold $0\leq T\leq 1=E_s$, the minimum of each straight line represents the approximated negative logarithmic error probability for this specific threshold. The decoder's aim is to select the threshold $T$ such that the minimum is maximized since this yields the minimal decoding error probability.

The following theorem provides a necessary and sufficient criterion for the optimal high--SNR threshold $T_\sigma$.

\begin{theorem}\label{thm:Optimal1T}
For good channel conditions, i.e. small channel standard deviation $\sigma$, $T_\sigma$ is the optimal threshold if and only if the following equation is fulfilled.
\begin{equation}\label{eqn:solve1T}
  \sqrt{p_\sigma(-\infty, -T_\sigma)}  =  p_\sigma(-T_\sigma, T_\sigma).
\end{equation}
\end{theorem}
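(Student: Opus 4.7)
The plan is to exploit the fact that $g_\sigma(\tau, T)$ is affine in $\tau$. In the abbreviated notation, the goal function (\ref{eqn:1Tgoalfunction}) reads $g_\sigma(\tau,T) = \tau\, l_x + \frac{\dmin-\tau}{2}\, l_e$, so its minimum over the compact interval $\tau\in[0,\dmin]$ is attained at one of the two endpoints. The endpoint values are $g_\sigma(0,T) = \frac{\dmin}{2}\, l_e$ and $g_\sigma(\dmin,T) = \dmin\, l_x$, so by (\ref{eqn:approx}) the approximated negative logarithmic decoding error probability for a given $T$ equals $\min\{\frac{\dmin}{2}\, l_e,\ \dmin\, l_x\}$, and the decoder's task is to choose $T$ to maximize this quantity.

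Next I would study how each endpoint value varies with $T$. Increasing $T$ widens the erasure window $(-T,T)$ and shrinks the error tail $(-\infty,-T)$, so $p_x=p_\sigma(-T,T)$ strictly increases while $p_e=p_\sigma(-\infty,-T)$ strictly decreases; equivalently, $l_x$ is strictly decreasing and $l_e$ strictly increasing on $[0,1]$. Consequently $\dmin\, l_x$ falls continuously from $+\infty$ at $T=0$, while $\frac{\dmin}{2}\, l_e$ rises continuously from the finite value $\frac{\dmin}{2}\,l_\sigma(-\infty,0)$. For sufficiently small $\sigma$ one has $p_x^2>p_e$ at $T=1$, so by continuity the two curves intersect at a unique $T_\sigma\in(0,1)$.

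A standard max--min argument now pins the optimum at this crossing. For $T<T_\sigma$ monotonicity gives $\frac{\dmin}{2}\,l_e(T)<\frac{\dmin}{2}\,l_e(T_\sigma)=\dmin\,l_x(T_\sigma)<\dmin\,l_x(T)$, so the pointwise minimum equals the still smaller increasing function $\frac{\dmin}{2}\,l_e$; symmetrically, for $T>T_\sigma$ it equals the still smaller decreasing function $\dmin\,l_x$. The minimum is therefore strictly maximized at $T=T_\sigma$. Equating the two endpoint values there yields $l_e=2l_x$, which after exponentiation becomes $p_e=p_x^2$ and, taking positive square roots, is exactly the criterion (\ref{eqn:solve1T}). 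Since $T_\sigma$ is the unique crossing point, this delivers both the ``if'' and the ``only if'' direction.

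The main obstacle is not the optimization itself but justifying the simplifications that turned (\ref{eqn:exact1T}) into the minimum of an affine function: dropping the ceiling in $t_\tau$ to secure linearity in $\tau$, and discarding the binary-entropy correction on the grounds that $l_x, l_e\to\infty$ as $\sigma\to 0$. I would therefore emphasize that the equivalence asserted by the theorem is understood in the high--SNR regime already announced in its statement, under which these approximations are legitimate. Once this is acknowledged, the remaining steps are purely elementary monotonicity and continuity arguments on the two endpoint curves.
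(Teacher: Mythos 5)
Your proposal is correct and follows essentially the same route as the paper: exploit linearity of $g_\sigma(\tau,T)$ in $\tau$ to reduce the minimization to the two endpoint values $\frac{\dmin}{2}\,l_e$ and $\dmin\,l_x$, equalize them to get $p_e=p_x^2$, and use the opposite monotonicity of $p_x$ and $p_e$ in $T$ to show any other threshold increases the maximum of the two error terms. Your explicit monotonicity/uniqueness argument is a slightly more careful rendering of the paper's perturbation step with $\Delta$, but it is the same underlying idea.
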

\begin{proof}
Since the goal function is linear in $\tau$, it assumes its minimum at one of the two extremal points $g_\sigma(0, T)$ and $g_\sigma(\dmin, T)$ which means that (\ref{eqn:approx}) reduces to
\begin{equation*}
-\ln(P_\sigma)\approx \min \left\{g_\sigma(0, T), g_\sigma(\dmin, T)  \right\}.
\end{equation*}
Let $T_\sigma$ be such that $g_\sigma(0, T_\sigma)=g_\sigma(\dmin, T_\sigma)$. Inserting the definition of the goal function shows that this is equivalent to
\begin{equation}\label{eqn:equality}
p_\sigma(-\infty, -T_\sigma)^{\frac{\dmin}{2}}=p_\sigma(-T_\sigma, T_\sigma)^{\dmin}.
\end{equation}
Assume that threshold $T'\neq T_\sigma$ is optimal. This gives
\begin{eqnarray*}
p_\sigma(-T', T') & = & p_\sigma(-T_\sigma, T_\sigma) +\Delta\;\mathrm{and}\\
p_\sigma(-\infty, -T') & = & p_\sigma(-\infty, -T_\sigma)-\Delta,
\end{eqnarray*}
where $\Delta>0$ if $T'>T_\sigma$ and $\Delta<0$ if $T'<T_\sigma$ since both
\begin{equation*}
p_\sigma(-\infty, -T_\sigma)+p_\sigma(-T_\sigma, T_\sigma)+p_\sigma(T_\sigma, \infty)=1
\end{equation*}
and
\begin{equation*}
p_\sigma(-\infty, -T')+p_\sigma(-T', T')+p_\sigma(T', \infty)=1
\end{equation*}
must be fulfilled. If we transform (\ref{eqn:approx}) back to the non--logarithmic domain we obtain
\begin{equation}\label{eqn:max}
P_\sigma\approx \max \left\{ p_\sigma(-\infty, -T)^{\frac{\dmin}{2}}, p_\sigma(-T, T)^{\dmin} \right\}.
\end{equation}
By using threshold $T'$, we increase one of the two expressions in (\ref{eqn:max}) and thereby also the maximum of both expressions. But this means that the decoding error probability is increased and thus $T'\neq T_\sigma$ cannot be the optimal threshold. Hence, $T_\sigma$ is optimal and the statement of the theorem is proved.
\end{proof}

Theorem~\ref{thm:Optimal1T} allows for the following geometric interpretation. The optimal threshold $T_\sigma$ is the specific threshold for which the goal function is a perfectly horizontal line in Figure~\ref{fig:1Tgoalfunctions}.

\begin{figure}[htbp]
\centering
\includegraphics[width=252pt]{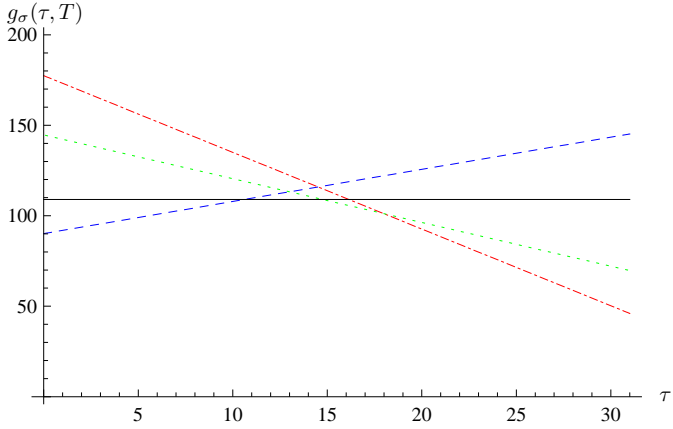}
\caption{Four exemplary instances of the goal function for $\sigma=0.4$ and $\dmin=31$. The minimum of each instance represents the negative logarithmic decoding error probability achievable with the specific threshold.}
\label{fig:1Tgoalfunctions}
\end{figure}

Figure~\ref{fig:1Tlocation} shows in the upper curve the optimal high--SNR threshold $T_\sigma$ for SNR values between $0$ and $20\,\mathrm{dB}$, the plot was obtained by numerically solving equation (\ref{eqn:solve1T}). Each point on the curve represents the optimal threshold for the specific SNR value, i.e. the threshold for which the goal function (\ref{eqn:1Tgoalfunction}) is independent of $\tau$ and thereby a perfect horizontal line in Figure~\ref{fig:1Tgoalfunctions}.

\begin{figure}[htbp]
\centering
\includegraphics[width=252pt]{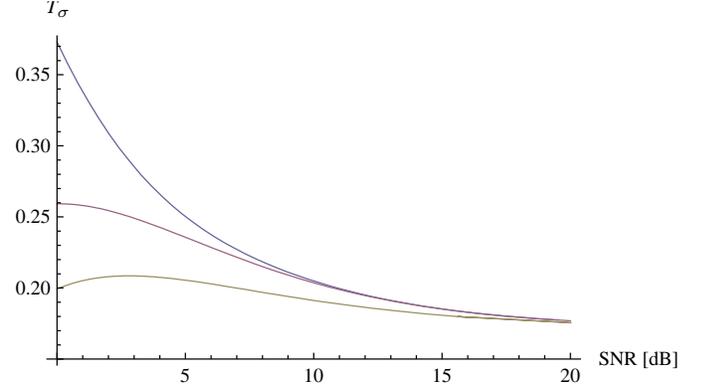}
\caption{Optimal threshold location $T_\sigma$ for SNR values between $0$ and
$20\,\mathrm{dB}$, $\sigma=\sqrt{\frac{1}{2}\,10^{-\frac{\mathrm{SNR}}{10}}}$.
The upper curve is the numerically calculated optimal high--SNR threshold given by
Theorem~\ref{thm:Optimal1T} and the middle curve is the analytic high--SNR threshold
from (\ref{eqn:1Tlocation}). The lower curve is the general optimal threshold for the full SNR range and was obtained 
by numerically minimizing (\ref{eqn:exact1T}) for a binary code with length $127$ and minimum distance $63$.}
\label{fig:1Tlocation}
\end{figure}

Obtaining an analytic solution for equation (\ref{eqn:solve1T}) is non--trivial since it essentially means solving
\begin{equation*}
\left(
\Erf\left( \frac{T-1}{\sqrt{2}\sigma} \right)
+ \Erf\left( \frac{T+1}{\sqrt{2}\sigma} \right)
\right)^2=
2\, \Erfc\left( \frac{T+1}{\sqrt{2}\sigma} \right),
\end{equation*}
where
\begin{equation*}
  \Erf(\alpha):=\frac{2}{\sqrt{\pi}}\int_0^\alpha e^{-\chi^2}\;d\chi
\end{equation*}
is the error function and $\Erfc(\alpha):=1-\Erf(\alpha)$ is its complementary counterpart. However, using the well--known approximation
\begin{equation*}
  \Erfc(\alpha)\approx \frac{\sqrt{2}}{\sqrt{\pi}\alpha} e^{-\frac{\alpha^2}{2}}
\end{equation*}
from \cite{forney:1966a} which is good for $\alpha>1$ we can at least for good channel conditions (i.e. small standard deviation $\sigma$) obtain the analytic solution
\begin{equation}\label{eqn:1Tlocation}
  T_\sigma:=3+3\sigma^2-\sqrt{9\sigma^4+\left(18-\ln\left(\frac{2\pi}{\sigma^2}\right)\right)\sigma^2+8},
\end{equation}
which approximates the optimal high--SNR threshold location for given $\sigma$. Figure~\ref{fig:1Tlocation} compares the numerical and the analytical optimal high--SNR threshold locations with the general optimal threshold. Note that the analytic approximation is only valid for high SNR values. This imposes no problem since the numerically calculated threshold given by Theorem~\ref{thm:Optimal1T} is also only valid in the high SNR regime.

We can utilize the analytic optimal threshold location to show the gain of single--threshold error/erasure BMD decoding over errors--only decoding for good channel conditions. If the optimal threshold is used, (\ref{eqn:max}) allows to approximate the decoding error probability by
\begin{equation*}
P_\sigma\approx p_\sigma(-\infty, -T_\sigma)^{\frac{\dmin}{2}}.
\end{equation*}
It is further well--known that the error probability of errors--only BMD decoding can be approximated by
\begin{equation*}
P_\mathrm{BMD}\approx p_\sigma(-\infty, 0)^{\frac{\dmin}{2}}.
\end{equation*}
Now we let $\sigma\rightarrow\infty$. From (\ref{eqn:1Tlocation}) we get $T_\sigma=3-2\sqrt{2}$. We can then solve
\begin{eqnarray*}
p_{\sigma_1}\left(-\infty, -3+2\sqrt{2}\right)^{\frac{\dmin}{2}} & = & p_{\sigma_2}\,(-\infty, 0)^{\frac{\dmin}{2}}\Leftrightarrow\\
\Erfc\left(\frac{2\left(\sqrt{2}-1\right)}{\sigma_1}\right) & = & \Erfc\left(\frac{1}{\sqrt{2}\sigma_2}\right)\Leftrightarrow\\
\sigma_1 & = & 2\sqrt{2}\left(\sqrt{2}-1\right)\sigma_2
\end{eqnarray*}
to see that the gain is $20\,\log_{10}\left( 2\sqrt{2}\left(\sqrt{2}-1\right) \right)\approx 1.4\,\mathrm{dB}$. This is in line with results obtained in the original works by Forney \cite{forney:1966a}, \cite{forney:1966b}.

\section{The General $z$ Thresholds Case}\label{section:zT}

We advance to the general case, where $z>1$ thresholds are used to determine which of the received symbols are considered as unreliable and thus are erased. The situation is depicted in Figure~\ref{fig:zTsketch}: We consider a set of $z$ thresholds $\mathcal{T}:=\{T_1, \ldots, T_z\}$ fulfilling $0\leq T_1<\cdots<T_z\leq 1=E_s$ and $z$ trials of error/erasure decoding for the received vector $\vec{y}$ are performed. The first one with decoder input $\phi_{T_1}(\vec{y})$, the second one with decoder input $\phi_{T_2}(\vec{y})$ and so on, where the quantization--and--erasing function is
\begin{equation*}
  \phi_{T_i}:=\left\{\begin{array}{rcl}
    \R & \longrightarrow & \{0, 1\}\cup\erasure\\
    y_j & \longmapsto & \left\{\begin{array}{cl}
      1 & ;\;\text{if}\;y_j<-T_i\\
      0 & ;\;\text{if}\;y_j>T_i\\
      \erasure & ;\;\text{if}\;-T_i\leq y_j \leq T_i
    \end{array}\right.\\
  \end{array}\right..
\end{equation*}
The result of this approach can obviously be a list of codewords. In our simplified setting, where the inner code is BPSK modulation, the selection of the best guess from this result list is straightforward -- it can be realized by applying the modulation operation to the binary symbols of all result list entries and choosing the one with the smallest Euclidean distance to the received vector $\vec{y}$. In the $z>1$ thresholds case we denote the event that none of the list entries is the originally transmitted codeword or that the list is empty as decoding error with probability $P_\sigma$.

\begin{figure}[htbp]
\centering
\includegraphics[width=252pt]{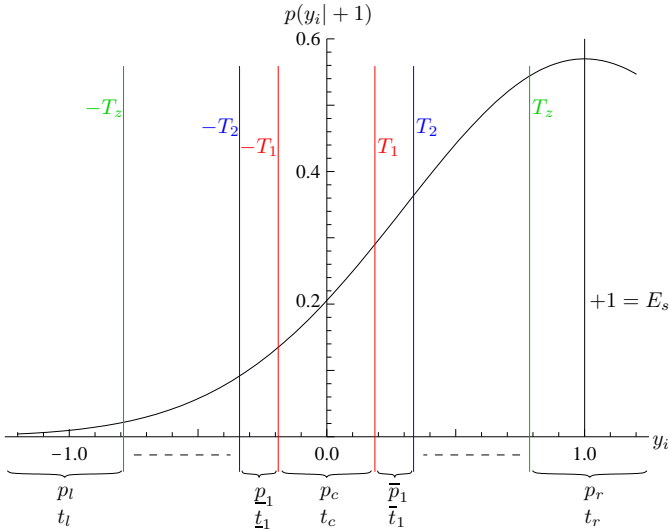}
\caption{Sketch of the threshold locations depicting the possible erasure intervals depending on thresholds $0\leq T_1<\cdots<T_z\leq 1=E_s$.}
\label{fig:zTsketch}
\end{figure}

In support of a dense notation we define the following abbreviated probabilities and their negative logarithmic counterparts.
\begin{eqnarray*}
p_l:=p_\sigma(-\infty, -T_z) & \mathrm{and} & l_l:=-\ln(p_l),\\
p_c:=p_\sigma(-T_1, T_1) & \mathrm{and} & l_c:=-\ln(p_c),\\
p_r:=p_\sigma(T_z, \infty) & \mathrm{and} & l_r:=-\ln(p_r),\\
\up_i:=p_\sigma(-T_{i+1}, -T_i) & \mathrm{and} & \ul_i:=-\ln(\up_i),\\
\op_i:=p_\sigma(T_i, T_{i+1}) & \mathrm{and} & \ol_i:=-\ln(\op_i),
\end{eqnarray*}
where $i=1, \ldots, z-1$. We also define the numbers of symbols within the received vector $\vec{y}$, that fall into the specific intervals.
\begin{eqnarray*}
t_l & := & \text{received symbols within}\;(-\infty, -T_z)],\\
t_c & := & \text{received symbols within}\;(-T_1, T_1),\\
t_r & := & \text{received symbols within}\;[T_z, \infty),\\
\ut_i & := & \text{received symbols within}\;(-T_{i+1}, -T_i],\\
\ot_i & := & \text{received symbols within}\;[T_i, T_{i+1}),
\end{eqnarray*}
where again $i=1, \ldots, z-1$. Some intervals and their corresponding abbreviated probability and number of symbols are depicted in Figure~\ref{fig:zTsketch}. With the previous definitions, the decoding error probability can be stated explicitly by
\begin{equation}\label{eqn:exactzT}
P_\sigma= \sum_{C} \binom{n}{t_l, t_c, t_r, \ut_1, \ot_1, \ldots, \ut_z, \ot_z} 
p_l^{t_l} p_c^{t_c} p_r^{t_r}
\prod_{i=1}^{z-1} \up_i^{\ot_i} \op_i^{\ot_i},
\end{equation}
where the sum is over all non-negative integers satisfying the two conditions
\begin{equation*}
C:=\left[ \begin{array}{l}
t_l+t_c+t_r+\sum_{i=1}^{z-1} (\ut_i+\ot_i)=n\quad\mathrm{and}\\
\forall\, i=1, \ldots, z:\\
\; 2(t_l+\sum_{\nu=i}^{z-1} \ut_\nu)+t_c+\sum_{\nu=1}^{i-1}(\ut_\nu+\ot_\nu)\geq \dmin
\end{array}\right].
\end{equation*}
The first condition in $C$ is obvious, it simply states that the total number of received symbols must equal the code length $n$. The second condition represents a decoding error for error/erasure BMD decoding of {\em all} input vectors $\phi_{T_i}$, $i=1, \ldots, z$. In this case, the number of errors for threshold $T_i$ is $\epsilon_{T_i}=t_l+\sum_{\nu=i}^{z-1} \ut_\nu$ and the number of erasures is $\tau_{T_i}=t_c+\sum_{\nu=1}^{i-1}(\ut_\nu+\ot_\nu)$ as can be easily seen by means of Figure~\ref{fig:zTsketch}. The second condition then follows from (\ref{eqn:cap}).

We can obtain an approximation of $P_\sigma$ if we assume that the second condition is fulfilled with equality for all thresholds $T_i\in\mathcal{T}$. For $i=1, \ldots, z-1$ we can then substract
\begin{equation*}
  2(t_l+\sum_{\nu=i+1}^{z-1} \ut_\nu)+t_c+\sum_{\nu=1}^{i}(\ut_\nu+\ot_\nu)= \dmin
\end{equation*}
from
\begin{equation*}
  2(t_l+\sum_{\nu=i}^{z-1} \ut_\nu)+t_c+\sum_{\nu=1}^{i-1}(\ut_\nu+\ot_\nu)= \dmin
\end{equation*}
and see that it holds
\begin{equation}\label{eqn:sidesequal}
\forall\,i=1, \ldots, z-1:\ut_i=\ot_i.
\end{equation}
Obeying this equality we obtain the new condition
\begin{equation*}
C^*:=\left[ 2 t_l+t_c+2 \sum_{i=1}^{z-1} \ut_i=\dmin\right]
\end{equation*}
for the sum in (\ref{eqn:exactzT}).

For good channel conditions, i.e. small values of the channel standard deviation $\sigma$, the decoding error probability can be approximated by
\begin{equation}\label{eqn:approxzT1}
P_\sigma\approx \max_{C^*}\left\{ p_l^{t_l} p_c^{t_c}
\prod_{i=1}^{z-1} (\up_i\op_i)^{\ut_i}\right\}.
\end{equation}
The term $p_r^{t_r}$ in (\ref{eqn:exactzT}) can be neglected since for small $\sigma$ it is close to one. Furthermore, (\ref{eqn:sidesequal}) is used to group the coefficients of the product under a single exponent. By transforming (\ref{eqn:approxzT1}) into negative logarithmic form we obtain
\begin{equation}\label{eqn:approxzT2}
-\ln(P_\sigma)\approx \min_{C^*}\left\{
t_l l_l+t_c l_c+\sum_{i=1}^{z-1} \ut_i(\ul_i+\ol_i)
\right\},
\end{equation}
which contains the goal function
\begin{multline}\label{eqn:zTgoalfunction}
g_\sigma(t_l, t_c, \ut_1, \ldots, \ut_{z-1}, T_1, \ldots, T_z):=\\
t_l l_l+t_c l_c+\sum_{i=1}^{z-1} \ut_i(\ul_i+\ol_i).
\end{multline}

The following theorem, whose proof exploits the linearity of the goal function in $t_l, t_c, \ut_1, \ldots, \ut_z$, provides a necessary and sufficient criterion for the optimal set of thresholds $\mathcal{T}_\sigma$.

\begin{theorem}\label{thm:OptimalzT}
For good channel conditions, i.e. small channel standard deviation $\sigma$, $\mathcal{T}_\sigma:=\{T_{1, \sigma}, \ldots, T_{z, \sigma}\}$ is the optimal set of thresholds if and only if the following system of equations is fulfilled.
\begin{eqnarray*}
  \sqrt{p_\sigma(-\infty, -T_{z, \sigma}}) & = & p_\sigma(-T_{1, \sigma}, T_{1, \sigma}),\\
  p_\sigma(-T_{1, \sigma}, T_{1, \sigma}) & = & \sqrt{p_\sigma(-T_{2, \sigma}, -T_{1, \sigma})p_\sigma(T_{1, \sigma}, T_{2, \sigma})}
\end{eqnarray*}
and
\begin{multline*}
  \forall\,i=1, \ldots, z-2:\\ p_\sigma(-T_{i+1, \sigma}, -T_{i, \sigma})p_\sigma(T_{i, \sigma}, T_{i+1, \sigma}) =\\ p_\sigma(-T_{i+2, \sigma}, -T_{i+1, \sigma})p_\sigma(T_{i+1, \sigma}, T_{i+2, \sigma}).
\end{multline*}
\end{theorem}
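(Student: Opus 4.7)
The plan is to mimic the argument used for Theorem~\ref{thm:Optimal1T}, exploiting the fact that the goal function (\ref{eqn:zTgoalfunction}) is \emph{linear} in the integer variables $(t_l, t_c, \ut_1, \ldots, \ut_{z-1})$ and that these variables are constrained by the single linear equation $C^*$ together with non-negativity. First I would observe that, under these constraints, the minimum in (\ref{eqn:approxzT2}) must be attained at one of the $z+1$ extremal points of the simplex, namely the points where all coordinates but one vanish. Setting $t_l=\dmin/2$, or $t_c=\dmin$, or $\ut_i=\dmin/2$ (with the remaining coordinates zero) yields the candidate values
\begin{equation*}
\tfrac{\dmin}{2}\, l_l,\quad \dmin\, l_c,\quad \tfrac{\dmin}{2}(\ul_i+\ol_i)\text{ for } i=1,\ldots,z-1,
\end{equation*}
so that
\begin{equation*}
-\ln(P_\sigma)\approx \min\!\left\{\tfrac{\dmin}{2}l_l,\;\dmin\, l_c,\;\tfrac{\dmin}{2}(\ul_1+\ol_1),\ldots,\tfrac{\dmin}{2}(\ul_{z-1}+\ol_{z-1})\right\}.
\end{equation*}

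Next I would observe that, because the decoder aims to \emph{maximize} this minimum, and because each candidate value depends on a different (overlapping) subset of the thresholds $T_1,\ldots,T_z$, the maximum of the minimum is achieved exactly when all $z+1$ candidates coincide. Equating $\tfrac{\dmin}{2}l_l=\dmin l_c$ gives $p_l^{1/2}=p_c$, which is the first stated equation; equating $\dmin l_c=\tfrac{\dmin}{2}(\ul_1+\ol_1)$ gives $p_c^2=\up_1\op_1$, which is the second; and equating consecutive $\tfrac{\dmin}{2}(\ul_i+\ol_i)=\tfrac{\dmin}{2}(\ul_{i+1}+\ol_{i+1})$ for $i=1,\ldots,z-2$ yields the remaining $z-2$ identities of the theorem. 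This shows the conditions are \emph{necessary} for the candidate values to all agree.

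For sufficiency I would run a perturbation argument analogous to the proof of Theorem~\ref{thm:Optimal1T}, performed in the non-logarithmic domain where the objective reads
\begin{equation*}
P_\sigma\approx \max\!\left\{p_l^{\dmin/2},\; p_c^{\dmin},\; (\up_1\op_1)^{\dmin/2},\ldots,(\up_{z-1}\op_{z-1})^{\dmin/2}\right\}.
\end{equation*}
Suppose some $\mathcal{T}'\neq\mathcal{T}_\sigma$ were strictly better. Then $\mathcal{T}'$ would have to \emph{simultaneously} decrease all of the $z+1$ quantities appearing in the max. I would derive a contradiction by tracing, for each threshold $T_i$, which quantities are affected: moving $T_1$ trades probability between $p_c$ and $\up_1\op_1$; moving $T_z$ trades between $p_l$ and $\up_{z-1}\op_{z-1}$; and moving an interior $T_i$ ($1<i<z$) trades mass between the neighboring products $\up_{i-1}\op_{i-1}$ and $\up_i\op_i$. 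In each case, conservation of probability on the affected intervals forces at least one of the relevant quantities to \emph{increase}, so the overall maximum cannot strictly decrease, contradicting the supposed optimality of $\mathcal{T}'$.

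The main obstacle will be the last step: in the single-threshold case only two probabilities traded mass, but here an interior threshold couples two distinct product terms simultaneously, and a joint perturbation of several thresholds could in principle shift mass around the whole partition. I would handle this by an inductive/one-threshold-at-a-time argument, moving from $T_1$ outward (or from $T_z$ inward), fixing all other thresholds, and applying the same ``mass-conservation forces one term up'' reasoning locally at each step; the equalities established in the necessity part guarantee that the configuration is a fixed point of every such local perturbation, ruling out any globally better choice.
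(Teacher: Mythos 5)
Your proposal follows essentially the same route as the paper: linearity of the goal function reduces the minimization to the $z+1$ extremal points of the simplex defined by $C^*$, equality of the resulting candidate values yields the stated system, and optimality is argued by contradiction since a different threshold set cannot decrease all terms of the resulting $\max$ simultaneously because the interval probabilities sum to one. If anything, you are more careful than the paper at the final perturbation step (the paper dispatches it in one sentence), so your proposal matches and slightly strengthens the published argument.
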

\begin{proof}
Due to the linearity of the goal function in $t_l, t_c, \ut_1, \ldots, \ut_z$, it assumes its minimum at one of the extremal points given by condition $C^*$, i.e. (\ref{eqn:approxzT2}) reduces to
\begin{eqnarray}\label{eqn:approxzT3}
-\ln(P_\sigma) & \approx & \min \left\{
g_\sigma\left(\frac{\dmin}{2}, 0, 0, \ldots, 0, T_1, \ldots, T_z\right), \right.\nonumber\\
 & & g_\sigma\left(0, \dmin, 0, \ldots, 0, T_1, \ldots, T_z\right),\nonumber\\
 & & g_\sigma\left(0, 0, \frac{\dmin}{2}, \ldots, 0, T_1, \ldots, T_z\right),\nonumber\\
 & & \cdots\nonumber\\
 & & \left. g_\sigma\left(0, 0, 0, \ldots, \frac{\dmin}{2}, T_1, \ldots, T_z\right)
 \right\}.
\end{eqnarray}
Let $\mathcal{T}_\sigma$ be the set of thresholds such that the value of the goal function is equal at all extremal points. Returning to the non--logarithmic representation, (\ref{eqn:approxzT3}) becomes
\begin{equation}\label{eqn:approxzT4}
P_\sigma \approx \max \left\{
p_l^{\frac{\dmin}{2}}, p_c^{\dmin}, \up_1\op_1^{\frac{\dmin}{2}}, \ldots,
\up_z\op_z^{\frac{\dmin}{2}}
\right\}.
\end{equation}
Let $\mathcal{T}'$ be a set of thresholds where at least one threshold is different than in $\mathcal{T}_\sigma$. Assume that $\mathcal{T}'$ is optimal. The only possible way for $\mathcal{T}'$ to decrease $P_\sigma$ would be to decrease all terms in (\ref{eqn:approxzT4}) simultaneously. This is impossible since the probabilities necessarily sum up to one, hence $\mathcal{T}_\sigma$ is the optimal set of thresholds and the statement is proved.
\end{proof}

\section{Conclusions and Outlook}\label{section:conclusions}

In this paper we considered a special case of (reduced) GMD decoding, i.e. transmission over an AWGN channel, BPSK modulation and error/erasure BMD decoding of a binary code. Starting from the single threshold case where only one decoding trial is performed, we generalized our considerations to the the $z>1$ thresholds case. For both cases, we derived thresholds for erasing unreliable received symbols, that are optimal in terms of the achievable minimal decoding error probability. To simplify usage of our results in practical applications we gave the approximated analytic threshold location for the single threshold case.

We showed that a gain of $1.4\,\mathrm{dB}$ over errors--only BMD decoding can be achieved with single--trial error/erasure decoding. We did not address the error probability of GMD decoding with $z>1$ thresholds in this paper. However, Forney showed that for good channel conditions the gain over errors--only decoding is approximately $3\,\mathrm{dB}$ if $z> (\dmin-1)/2$, i.e. in case of {\em full} GMD decoding.

Our work on the subject is continued with the goal to generalize the considerations from this paper to concatenated codes where the inner code is a binary block code and the outer code is a (potentially interleaved) {\em Reed--Solomon} code.

\bibliographystyle{ieeetr}
%\bibliography{/home/senger/tex/bib/books,/home/senger/tex/bib/papers,/home/senger/tex/bib/misc,/home/senger/tex/bib/phd}

\end{document}